\documentclass[10pt,journal,a4paper]{IEEEtran}
\textheight = 10 in   

\usepackage{amssymb}
\usepackage{amsmath}
\usepackage{bm}
\usepackage{comment}
\usepackage{amsthm}
\usepackage{graphicx}
\usepackage{gensymb}
\usepackage{subfigure}
\usepackage{float}
\usepackage{cite}
\usepackage{enumerate}
\usepackage{enumitem}
\usepackage{multirow}
\usepackage{amsfonts}
\usepackage{url}
\usepackage{diagbox}
\usepackage{color, soul}
\usepackage{amsthm}
\usepackage{type1cm}
\DeclareMathOperator*{\argmax}{arg\,max}

\newcommand{\titlefont}{\fontsize{16.5pt}{24pt}\selectfont}

\begin{document}
\newtheorem{remark}{\bf~~Remark}
\newtheorem{proposition}{\bf~~Proposition}
\newtheorem{theorem}{\bf~~Theorem}
\newtheorem{definition}{\bf~~Definition}
\title{{\titlefont Intelligent Omni-Surfaces Aided Wireless Communications: Does the Reciprocity Hold?}}

\author{
\IEEEauthorblockN{
    {Shaohua Yue},
	{Shuhao Zeng}, \IEEEmembership{Student Member, IEEE},
	{Hongliang Zhang}, \IEEEmembership{Member, IEEE},\\
	{Fenghan Lin}, \IEEEmembership{Senior Member, IEEE},
	{Liang Liu}, \IEEEmembership{Member, IEEE},
	{and Boya Di}, \IEEEmembership{Member, IEEE}}
	\vspace{-0.9cm}
	\thanks{S. Yue, S. Zeng, and B. Di are with the Department of Electronics, Peking University, Beijing 100871, China. (email: \{yueshaohua; shuhao.zeng; boya.di\}@pku.edu.cn).} 
	\thanks{Hongliang Zhang is with the Department of Electrical and Computer Engineering, Princeton University, Princton, NJ 08544, USA. (email:hz16@princeton.edu).} 
	\thanks{Fenghan Lin is with the School of Information Science and Technology, ShanghaiTech University, Shanghai  201210, China. (email: linfh@shanghaitech.edu.cn).}
	\thanks{Liang Liu is with the Department of Electronic and Information Engineering, The Hong Kong Polytechnic University, Hong Kong, SAR, China. (e-mail:liang-eie.liu@polyu.edu.hk).}

}

\maketitle

\begin{abstract}
Intelligent omni-surfaces (IOS) have attracted great attention recently due to its potential to achieve full-dimensional  communications by simultaneously reflecting and refracting signals toward both sides of the surface. 
However, it still remains an open question whether the reciprocity holds between the uplink and downlink channels in the IOS-aided wireless communications.
In this work, we first present a physics-compliant IOS-related channel model, based on which the channel reciprocity is investigated. We then demonstrate the angle-dependent electromagnetic response of the IOS element in terms of both incident and departure angles. This serves as the key feature of IOS that drives our analytical results on beam non-reciprocity. Finally, simulation and experimental results are provided to verify our theoretical analyses.
\end{abstract}

\begin{IEEEkeywords}
Intelligent omni-surface, channel and beam reciprocity, wireless communication testbed.
\end{IEEEkeywords}

\section{Introduction}
Reconfigurable intelligent surfaces (RISs) have been viewed as a promising technology to realize ultra-massive multi-input multi-output (MIMO) for the next-generation communications~\cite{RISintro}. Recently, the intelligent omni-surface (IOS), a novel type of RIS, has drawn considerable attention in both academia and industry~\cite{IOSintro}. As shown in Fig.~\ref{scenario}, unlike traditional reflective-only RIS which reflects incident signals, the IOS can simultaneously reflect and refract signals toward desired users~\cite{IOS}, thereby improving the spectrum efficiency and cell coverage on both sides of the surface to support full-space wireless transmission~\cite{IOScover}.

Existing IOS-related researches mainly focus on beamforming design\cite{IOS_impact} and performance analysis\cite{IOSana}. 
Authors in \cite{IOS_impact} maximized the sum rate of an IOS-aided downlink communication system by jointly optimizing the phase shifts of the IOS and the digital beamforming vector of the base station (BS).
In \cite{IOSana}, the outage probabilities of IOS-aided orthogonal multiple access networks and IOS-aided non-orthogonal multiple access networks were analyzed and compared. 

However, few of these works addressed the reciprocity of the IOS-based uplink and downlink channels. 
Note that the conventional channel reciprocity is based on the reciprocal transmission of electromagnetic (EM) waves through the air. In contrast, the deployment of IOS changes the EM environment, making it unknown whether IOS has the same impact on EM waves when the signal impinges on different sides of the IOS. 
Therefore, it is non-trivial to judge the existence of reciprocity in IOS-enabled communications. This remains an important yet open question which directly influences both the channel estimation and beamforming scheme design.

To answer the above question, two challenges are to be addressed. \emph{First}, to theoretically analyze the reciprocity of the IOS-aided channels, a reasonable and physics-compliant channel model is required for both the uplink and downlink channels. \emph{Second}, beam reciprocity of IOS should be discussed. Specifically, for a given IOS configuration, the transmitter reflects or refracts the signal to a receiver at a certain direction via the IOS. It is unknown whether the beamforming still works after the directions of the transmitter and the receiver devices exchange. If such reciprocity holds, the IOS can have the same configuration for both the uplink and the downlink beamforming. 

To cope with the above challenges, in this letter, we model the downlink and uplink channels based on the propagation principles of EM waves and the operation principles of the IOS. 
We answer the open question by pointing out that the channel reciprocity holds whereas the beam reciprocity does not. It is supported by our theoretical finding that the EM response of the IOS element is influenced by both the incident angle and the departure angle. Finally, an IOS-aided communication testbed is built to verify the analytical results of the channel and beam reciprocity as well as the incidence-departure angle-dependency via experiments.
\section{System Model}\label{sys_mod}
\subsection{Scenario Description}
\begin{figure}[t]
\centerline{\includegraphics[width=7.5cm,height=6cm]{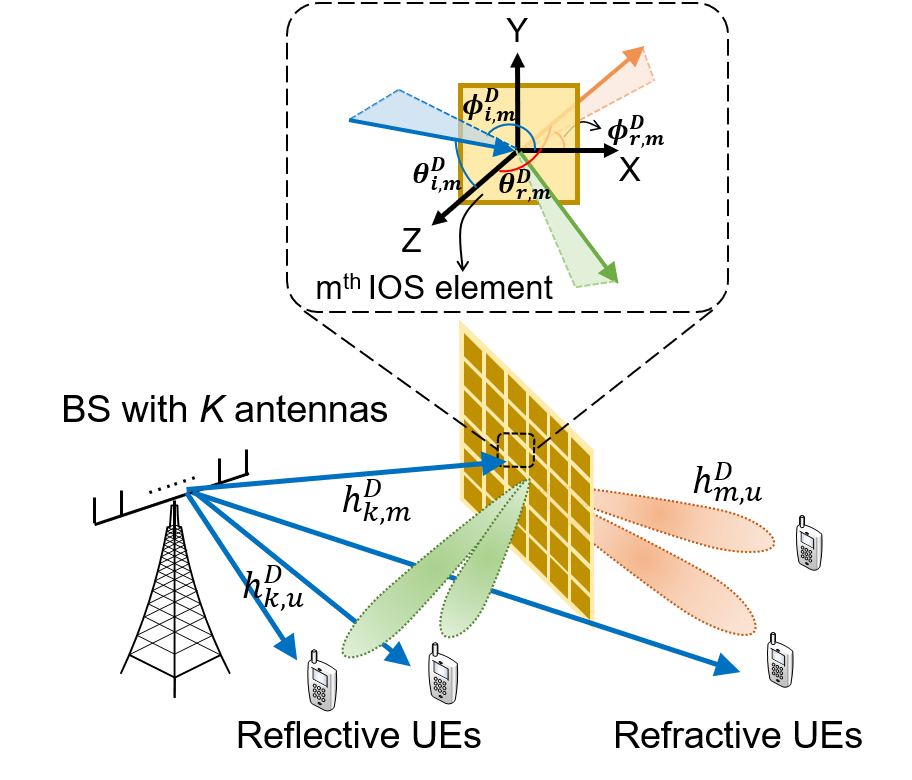}}
\caption{IOS-assisted wireless communication system model.}
\vspace{-0.6cm}
\label{scenario}
\end{figure}

As shown in Fig.~\ref{scenario}, we consider a multi-user MIMO network, where one BS of $K$ antennas communicates with multiple single-antenna users. 
The IOS is deployed between the BS and users to reflect or refract the transmit signal toward the users so that the link quality can be enhanced. With the help of the IOS, signals can be reflected or refracted toward the targets.

\vspace{-0.3cm}
\subsection{Channel Model}
For convenience, we use $\mathcal{X} \in \{D, U\}$ as an indicator to represent the downlink or uplink channel.
The channel between the $k^{th}$ antenna of the BS and the antenna of user $u$ consists of the direct link $h_{k,u}^{\mathcal{X}}$ and $M$ IOS-based scattering channels, where the $m^{th}$ scattering channel 
is the cascade of channel $\tilde{h}_{k,m}^{\mathcal{X}}$ between the $k^{th}$ antenna of the BS and the $m^{th}$ IOS element, the EM response $g_{m}^{\mathcal{X}}$ of the $m^{th}$ IOS element to the incident signal\footnote{The EM response $g_{m}^{\mathcal{X}}$ induces a change in the amplitude and phase of the incident signal.}, and the channel $\hat{h}_{m, u}^{\mathcal{X}}$ between the $m^{th}$ IOS element and the user $u$\cite{star_ches}, respectively. Depending on whether the $m^{th}$ IOS element reflects or refracts the signal,  $g_{m}^{\mathcal{X}}$ takes different values as follows

\begin{equation}
g_{m}^{\mathcal{X}} = \begin{cases}
g_{m}^{\mathcal{X},R}, & {\rm signal \ is \ reflected,} \\
g_{m}^{\mathcal{X},T}, & {\rm signal \ is \ refracted.}\\
\end{cases}\label{allo_edge_limit}
\end{equation}

Therefore, the effective channel $H^{\mathcal{X}}_{k,u}$ between the $k^{th}$ antenna of the BS and the user $u$, which takes the IOS-related channels into account, is expressed as

\begin{equation}
    H^{\mathcal{X}}_{k,u} = h_{k,u}^{\mathcal{X}} + \sum_{m} \tilde{h}_{k,m}^{\mathcal{X}}g_{m}^{\mathcal{X}} \hat{h}_{m,u}^{\mathcal{X}},\label{channel}
\end{equation}
where $h_{k,u}^{\mathcal{X}}$ represents the direct link between the $k^{th}$ antenna of the BS and the user antenna. $\tilde{h}_{k,m}^{\mathcal{X}}$ and $\hat{h}_{m,u}^{\mathcal{X}}$ are the channel between the $k^{th}$ BS antenna and the $m^{th}$ IOS element and the channel between the $m^{th}$ IOS element and the user antenna, respectively, which are modeled as\cite{tang}

\begin{subequations}
\begin{align}
    &\tilde{h}_{k,m}^{\mathcal{X}} = \frac{\sqrt{G_{k}F_{k}(\theta_{i/r,m}^{\mathcal{X}},\phi_{i/r,m}^{\mathcal{X}})}}{\sqrt{4\pi}d_{k,m}} e^{-j\frac{2\pi d_{k,m}}{\lambda}},\label{b2i}\\
    &\hat{h}_{m,u}^{\mathcal{X}} = \frac{\lambda\sqrt{G_{u}F_{u}(\theta_{i/r,m}^{\mathcal{X}},\phi_{i/r,m}^{\mathcal{X}})}}{4\pi d_{u,m}}e^{-j\frac{2\pi d_{u,m}}{\lambda}},\label{i2u}
\end{align}
\end{subequations}
where $G_k$ and $G_u$ are the antenna gain of the $k^{th}$ BS antenna and user antenna;
$F_k$ and $F_u$ are the radiation pattern of the $k^{th}$ BS antenna and user antenna; $d_{k,m}$ is the distance between the $k^{th}$ antenna of the BS and the $m^{th}$ IOS element; $d_{u,m}$ is the distance between the antenna of user $u$ and the $m^{th}$ IOS element; $\lambda$ is the signal wavelength. The notation $i/r$ represents the incident signal from the antenna to the IOS element or the signal that are reflected or refracted from the IOS element to the antenna. For instance, as shown in Fig.~\ref{scenario}, $(\theta_{i,m}^{D}, \phi_{i,m}^{D})$ represent the elevation angle and the azimuth angle from the $k^{th}$ antenna of the BS to the $m^{th}$ IOS element in the downlink channel and $(\theta_{r,m}^{D}, \phi_{r,m}^{D})$ represent the elevation angle and the azimuth angle from the $m^{th}$ IOS element to the user antenna in the downlink channel. The elevation angle and the azimuth angle can be defined similarly for the uplink channel.

The EM response of the $m^{th}$ IOS element to the incident signal, i.e., $g_{m}^{\mathcal{X}}$, is an angle-dependent value and modeled as~\cite{IOS_impact}
\begin{equation}
    g_{m}^{\mathcal{X}}=\sqrt{G_mF_m(\theta_{i,m}^{\mathcal{X}},\phi_{i,m}^{\mathcal{X}})F_m(\theta_{r,m}^{\mathcal{X}},\phi_{r,m}^{\mathcal{X}})S_m}\Gamma_m,\label{impact}
\end{equation}
where $G_m$ is the power gain, $F_m$ is the radiation pattern, and $S_m$ is the surface area of the IOS element, which are same for all IOS element, i.e., $G_m=G_{m'}, F_m=F_{m'}, S_m=S_{m'}, \forall m'\in \{1,2...M\}$.  $(\theta_{i,m}^{\mathcal{X}},\phi_{i,m}^{\mathcal{X}})$ is the incident angle from the transmitting antenna to the $m^{th}$ element, $(\theta_{r,m}^{\mathcal{X}},\phi_{r,m}^{\mathcal{X}})$ is the departure angle from the $m^{th}$ element to the receiving antenna. Moreover, the reflection/refraction coefficient of the $m^{th}$ IOS element is denoted as $\Gamma_m=\beta_m e^{j\psi_m}$, where $\beta_m$ and $\psi_m$ are the amplitude and phase shift adjustments that are imposed on the signal. $\Gamma_m$ is affected by the IOS element state, the signal incident angle and the departure angle. 

\section{Theoretical Analysis on Channel Model}
Based on the proposed IOS-based channel model, we give the following analyses on the channel reciprocity. Then we discuss the impact of IOS angle dependency on beamforming and channel estimation.
\vspace{-10pt}
\subsection{Analysis on the Channel Reciprocity}



\begin{proposition}\label{proposition1}
The channel reciprocity holds in the IOS-aided system. In detail, the cascaded uplink channel user-IOS and IOS-BE, i.e., $\hat{h}^{U}_{m,u}\tilde{h}^{U}_{k,m}$, equals the cascaded downlink channel BS-IOS and IOS-user, i.e., $\tilde{h}^{D}_{k,m}\hat{h}^{D}_{m,u}$.
\end{proposition}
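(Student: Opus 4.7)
The plan is to prove Proposition~\ref{proposition1} by expanding both cascades explicitly via~\eqref{b2i} and~\eqref{i2u}, and then matching their constituent factors one-by-one using the geometric symmetry between uplink and downlink. Concretely, I would write the downlink product as
\begin{equation*}
\tilde{h}^{D}_{k,m}\hat{h}^{D}_{m,u} = \frac{\lambda\sqrt{G_k G_u\, F_k(\theta_{i,m}^{D},\phi_{i,m}^{D})\, F_u(\theta_{r,m}^{D},\phi_{r,m}^{D})}}{4\pi\sqrt{4\pi}\, d_{k,m}\, d_{u,m}}\, e^{-j\frac{2\pi(d_{k,m}+d_{u,m})}{\lambda}},
\end{equation*}
and the uplink product as the same expression with the superscript $D$ replaced by $U$ and the subscripts $i,r$ swapped on the angles.

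First, I would observe that the two distances $d_{k,m}$ and $d_{u,m}$ are purely geometric quantities fixed by the positions of BS antenna $k$, IOS element $m$, and user $u$, so both the amplitude factor $1/(d_{k,m}d_{u,m})$ and the phase factor $e^{-j 2\pi (d_{k,m}+d_{u,m})/\lambda}$ are identical on the two sides. Next, the antenna gains $G_k$ and $G_u$ are intrinsic properties of the antennas and carry no directionality, so they also match trivially. These observations already dispose of the distance- and gain-dependent part of the formula.

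The substantive step is the equality of the radiation-pattern values. The subtlety here is purely notational: when the transmission direction is reversed, the labels ``incident'' and ``departure'' swap their roles, so I would explicitly record $(\theta_{i,m}^{D},\phi_{i,m}^{D})=(\theta_{r,m}^{U},\phi_{r,m}^{U})$ and $(\theta_{r,m}^{D},\phi_{r,m}^{D})=(\theta_{i,m}^{U},\phi_{i,m}^{U})$ as geometric equalities---the ray along which the BS radiates toward the IOS in the downlink is the very ray along which it receives from the IOS in the uplink, and likewise between the user and the IOS. By the reciprocity of the ordinary (non-metasurface) antennas, $F_k$ and $F_u$ therefore evaluate to the same numbers on both sides, and combining this with the distance and gain matches closes the argument.

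The main obstacle, modest as it is, lies exactly in this notational bookkeeping: one has to convince the reader that the apparently different angular arguments of $F_k$ and $F_u$ refer to the same physical directions once the roles of transmitter and receiver are exchanged. Note that the angle-dependent IOS response $g_m^{\mathcal{X}}$ does not enter the statement, since the proposition concerns only the product of the two propagation channels; its angle-dependence is precisely the ingredient that will later break \emph{beam} reciprocity, but it is immaterial to the \emph{channel} reciprocity proved here.
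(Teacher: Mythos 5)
Your proposal is correct and follows essentially the same route as the paper's own proof: expand both cascaded products via \eqref{b2i} and \eqref{i2u}, note that the distance and gain factors coincide, and invoke the geometric identities $(\theta_{r,m}^{U},\phi_{r,m}^{U}) = (\theta_{i,m}^{D},\phi_{i,m}^{D})$ and $(\theta_{i,m}^{U},\phi_{i,m}^{U}) = (\theta_{r,m}^{D},\phi_{r,m}^{D})$ to match the radiation-pattern factors. Your bookkeeping is in fact slightly more careful than the paper's (you retain the $\lambda$ factor that the paper's displayed products drop), and your closing remark that $g_m^{\mathcal{X}}$ is immaterial here correctly anticipates the division of labor between Propositions~\ref{proposition1} and~\ref{proposition2}.
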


\begin{proof}
Based on the definition of  $\tilde{h}^{\mathcal{X}}_{k,m},\hat{h}^{\mathcal{X}}_{m,u}$ in (\ref{b2i}), (\ref{i2u}),
\begin{equation}
\begin{aligned}
    &\hat{h}^{U}_{m,u}\tilde{h}^{U}_{k,m}\\
    &=\frac{\sqrt{G_{u}F_{u}(\theta_{i,m}^U,\phi_{i,m}^U)G_{k}F_{k}(\theta_{r,m}^U,\phi_{r,m}^U)}}{{4\pi}^{1.5}d_{u,m}d_{k,m}} e^{-j\frac{2\pi (d_{u,m}+d_{k,m})}{\lambda}}.
\end{aligned}
\end{equation}
\begin{equation}
\begin{aligned}
    &\tilde{h}^{D}_{k,m}\hat{h}^{D}_{m,u}\\
    &=\frac{\sqrt{G_{k}F_{k}(\theta_{i,m}^D,\phi_{i,m}^D)G_{u}F_{u}(\theta_{r,m}^D,\phi_{r,m}^D)}}{{4\pi}^{1.5}d_{k,m}d_{u,m}} e^{-j\frac{2\pi (d_{k,m}+d_{u,m})}{\lambda}},
\end{aligned}
\end{equation}

For the uplink channel and the downlink channel, the signal direction at the transmitter and the receiver keeps the same, i.e., $(\theta_{r,m}^{U},\phi_{r,m}^{U}) = (\theta_{i,m}^{D},\phi_{i,m}^{D})$ and $(\theta_{i,m}^{U},\phi_{i,m}^{U}) = (\theta_{r,m}^{D},\phi_{r,m}^{D})$. Therefore, $\hat{h}^{U}_{m,u}\tilde{h}^{U}_{k,m} = \tilde{h}^{D}_{k,m}\hat{h}^{D}_{m,u}$.
\end{proof}

\begin{proposition}\label{proposition2}
As shown in (\ref{channel}), the downlink channel $H^{D}_{k,u}$ and uplink channel $H^{U}_{k,u}$ of IOS-aided communication system are equal.
\end{proposition}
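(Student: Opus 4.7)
The plan is to decompose the effective channel in~(\ref{channel}) into three ingredients --- the direct BS-user link $h_{k,u}^{\mathcal{X}}$, the cascaded propagation pair $\tilde{h}_{k,m}^{\mathcal{X}}\hat{h}_{m,u}^{\mathcal{X}}$, and the IOS element response $g_m^{\mathcal{X}}$ --- and then show that each ingredient is invariant when the link direction is exchanged. Summing term-by-term over $m$ will then give $H^{D}_{k,u}=H^{U}_{k,u}$.

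First, I would invoke the ordinary Lorentz reciprocity of free-space propagation to conclude $h_{k,u}^{D}=h_{k,u}^{U}$: the BS antenna, the user antenna and the intervening medium are all assumed reciprocal, and the only angles entering a Friis-type direct-link expression are the departure/arrival angles at the two antennas, which simply swap roles between uplink and downlink. Second, for the cascaded channel $\tilde{h}_{k,m}^{\mathcal{X}}\hat{h}_{m,u}^{\mathcal{X}}$, I would directly quote Proposition~\ref{proposition1}, which already establishes the identity by using the geometric fact $(\theta_{r,m}^{U},\phi_{r,m}^{U})=(\theta_{i,m}^{D},\phi_{i,m}^{D})$ and $(\theta_{i,m}^{U},\phi_{i,m}^{U})=(\theta_{r,m}^{D},\phi_{r,m}^{D})$.

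The remaining step is to show $g_m^{D}=g_m^{U}$ from the expression in~(\ref{impact}). Here I would exploit the same angle-swap identities: the factor $F_m(\theta_{i,m}^{\mathcal{X}},\phi_{i,m}^{\mathcal{X}})F_m(\theta_{r,m}^{\mathcal{X}},\phi_{r,m}^{\mathcal{X}})$ is a \emph{symmetric} function of the ordered pair (incident angle, departure angle), so exchanging the two leaves the product unchanged; $G_m$ and $S_m$ are intrinsic element constants independent of direction. For the reflection/refraction coefficient $\Gamma_m=\beta_m e^{j\psi_m}$, one must argue that, for a fixed IOS element state and a fixed pair of ray directions, $\Gamma_m$ takes the same value whichever direction is designated as ``incident''. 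I would justify this by the same passive, linear, time-invariant structure of the IOS that underlies Proposition~\ref{proposition1}: Lorentz reciprocity of the scatterer implies that its bistatic scattering coefficient is symmetric under interchange of incident and observation directions. Collecting these three facts gives $\tilde{h}_{k,m}^{D}g_m^{D}\hat{h}_{m,u}^{D}=\hat{h}_{m,u}^{U}g_m^{U}\tilde{h}_{k,m}^{U}$ for every $m$, and summing together with the direct-link equality yields the claim.

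The main obstacle, in my view, is not the geometric bookkeeping of angles but the reciprocity of $\Gamma_m$ itself: the quantities $\beta_m$ and $\psi_m$ are stated in the model to depend on the element state, the incident angle \emph{and} the departure angle, and the proposition silently requires that this dependence be symmetric in the last two arguments. I would either cite a reciprocity argument for the passive IOS unit cell or, more pragmatically, treat reciprocity of $\Gamma_m$ as an implicit modelling assumption inherited from~\cite{IOS_impact} and flag it explicitly. Once that is accepted, the proof reduces to the routine algebraic substitution sketched above.
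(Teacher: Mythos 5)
Your proposal is correct and follows essentially the same route as the paper's proof: direct-link reciprocity, Proposition~\ref{proposition1} for the cascaded propagation terms, and reciprocity of $g_m^{\mathcal{X}}$ via the symmetry of (\ref{impact}) together with a reciprocity argument for $\Gamma_m$ (the paper invokes the Rayleigh--Carson reciprocity theorem on the grounds that the IOS contains no time-variant or nonlinear material, which is the same physical justification you give via Lorentz reciprocity of a passive linear time-invariant scatterer). Your explicit flagging of the symmetry of $\Gamma_m$ in its angle arguments as the crux is a fair and slightly more careful reading of what the paper asserts.
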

\begin{proof}
According to the channel model in (\ref{channel}), we have
\begin{subequations}
\begin{align}
H^{D}_{k,u} = h_{k,u}^{D} + \sum_{m}\tilde{h}_{k,m}^{D}g_{m}^{D}\hat{h}_{m,u}^{D},\label{downlink}\\
H^{U}_{k,u} = h_{k,u}^{U} + \sum_{m}\hat{h}_{m,u}^{U}g_{m}^{U}\tilde{h}_{k,m}^{U}.\label{uplink}
\end{align}
\end{subequations}
Firstly, the direct links satisfy channel reciprocity, i.e., $h_{k,u}^D=h_{k,u}^U$. Secondly, IOS's EM response to the signal is reciprocal, i.e., $g_{m}^D=g_{m}^U$, according to (\ref{impact}), since the reflection/refraction coefficient $\Gamma$ is the same  for both uplink and downlink paths. This is guaranteed by the fact that IOS does not contain time-variant or nonlinear material such that the Rayleigh-Carson reciprocity theorem \cite{gmreciprocity} holds. Thirdly, according to Proposition ~\ref{proposition1}, we have $\tilde{h}^{D}_{k,m}\hat{h}^{D}_{m,u}=\hat{h}^{U}_{m,u}\tilde{h}^{U}_{k,m}$. Therefore, the uplink channel equals the downlink channel, i.e., $H^{D}_{k,u}=H^{U}_{k,u}$.

\end{proof}

This proposition is also verified by the experiment given in section IV-D.
\subsection{Influence of the IOS Angle Dependency on Beamforming}

\begin{remark}\label{remarkbeam}
The IOS element's EM response, i.e., $g_m^{\mathcal{X}}$ in (\ref{impact}), is angle-dependent and thus, the ideal phase shift assumption does not hold. Such angle-dependent property should be considered for accurate IOS-aided beamforming.
\end{remark}

In more details, under the ideal phase shift assumption,  the set of phase shifts is ideally set as $\{0, \pi\}$ for one-bit coded RIS~\cite{idealphase}. However, the  phase shift  is not constant and is affected by both the incident angle and the departure angle in practice. \emph{First}, one may observe the influence of the incident angle on the phase shift $\psi_m$ based on the full-wave simulation as will be shown in Section IV-A. \emph{Second}, given the reciprocity of $\Gamma$,  $\psi_m$ is also influenced by the departure angle. If $\psi_m$ is not influenced by the departure angle, the uplink and downlink channels for the same path can have different incident angles toward the IOS element. These two channels are then unequal, which contradicts with the Proposition~\ref{proposition2}. Thus, we can conclude that the phase shift $\psi_m$ is influenced by both incident and departure angle.  

Since the ideal phase shift assumption does not hold, beamforming methods based on such an assumption may suffer from performance degradation, which will be illustrated by Fig.~\ref{tran_beamform} and Fig.~\ref{refl_beamform} in Section IV-B.

Though the downlink and uplink channels of IOS are equal, the angle dependency of IOS leads to the beam non-reciprocity for the IOS, which is depicted as below. 

\begin{definition}
Denote ($\theta_0,\phi_0$) as the incident angle of the signal impinging on the IOS and ($\theta_1,\phi_1$) as the main beam direction of the reflected/refracted signal. Denote ($\theta_2,\phi_2$) as the main beam direction of the reflected/refracted signal if the incident signal is at direction ($\theta_1,\phi_1$). The IOS configuration is the same for the above two IOS-assisted beamforming.

\emph{IOS satisfies the} beam reciprocity \emph{if:} $(\theta_0,\phi_0) = (\theta_2,\phi_2)$, $\forall (\theta_0,\phi_0)$. Otherwise, IOS is beam non-reciprocal.
\end{definition}

\begin{proposition}
Beam reciprocity does not hold for IOS.\label{beamremark}
\end{proposition}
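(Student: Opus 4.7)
The plan is to characterise the main beam direction through the phase-alignment condition implied by the channel model, and then to show that this map from incident angle to beam direction fails to be an involution precisely because of the angle dependence established in Remark~\ref{remarkbeam}.

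First, starting from (\ref{channel})--(\ref{impact}) and retaining the far-field terms, I would write the IOS-scattered signal at observation direction $(\theta_r,\phi_r)$ for incidence direction $(\theta_i,\phi_i)$ under a fixed IOS configuration as
\begin{equation*}
E(\theta_r,\phi_r\mid\theta_i,\phi_i)=\sum_{m=1}^{M}A_{m}\,\exp\!\Bigl(j\bigl[\psi_{m}(\theta_i,\phi_i,\theta_r,\phi_r)-k\,\vec{u}(\theta_i,\phi_i,\theta_r,\phi_r)\cdot\vec{r}_{m}\bigr]\Bigr),
\end{equation*}
where $\vec{r}_{m}$ is the position of the $m$th element, $\vec{u}$ is the sum of the incident and departure unit-direction vectors, and $A_m$ absorbs the angle-symmetric amplitude factors inherited from $\sqrt{F_{m}F_{m}}$. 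The main beam direction $(\theta_r^{*},\phi_r^{*})$ is the argmax of $|E|$ over $(\theta_r,\phi_r)$ and, in the phase-dominated far-field regime, is pinned down by requiring the bracketed exponent to be $m$-independent.

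Second, I would contrast the ideal phase-shift model with the physics-compliant one. Under the ideal assumption that $\psi_m$ is angle-independent, differencing the stationarity equations for two elements yields $k\,\vec{u}\cdot(\vec{r}_{m}-\vec{r}_{m'})=\psi_{m}-\psi_{m'}$, so that the tangential part of $\vec{u}$ is pinned to a configuration-determined constant vector $\vec{q}$. The resulting steering law forces $(\vec{u})_{\parallel}=\vec{q}$, which is trivially involutive in $(\theta_i,\phi_i)\leftrightarrow(\theta_r^{*},\phi_r^{*})$, and beam reciprocity would follow. By Remark~\ref{remarkbeam}, however, $\psi_m$ depends genuinely on \emph{both} the incident and departure angles, so the differenced equation becomes $k\,\vec{u}\cdot(\vec{r}_{m}-\vec{r}_{m'})=\psi_{m}(\theta_i,\phi_i,\theta_r^{*},\phi_r^{*})-\psi_{m'}(\theta_i,\phi_i,\theta_r^{*},\phi_r^{*})$, and the effective offset $\vec{q}$ now depends implicitly on the incident angle.

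Third, I would show that this $(\theta_i,\phi_i)$-dependent offset destroys the involutive property. Composing the beam-steering map with itself, the second application leaves a residual tangential vector proportional to $\vec{q}(\theta_r^{*},\phi_r^{*};\cdot)-\vec{q}(\theta_i,\phi_i;\cdot)$, which is generically non-zero whenever $\psi_m$ has any nontrivial dependence on $(\theta_r,\phi_r)$. Hence $(\theta_2,\phi_2)\neq(\theta_0,\phi_0)$ for at least one incident direction, which is the desired conclusion. A small explicit example (a two- or three-element array with an explicit parametric angle-dependent $\psi_m$), or a direct appeal to the full-wave simulation and measurement results reported in Section~IV, can be used to rule out an accidental cancellation in symmetric special cases.

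The step I expect to be the main obstacle is cleanly separating beam reciprocity from the channel reciprocity proved in Proposition~\ref{proposition2}. The Rayleigh--Carson identity $\psi_m(\theta_i,\phi_i,\theta_r,\phi_r)=\psi_m(\theta_r,\phi_r,\theta_i,\phi_i)$ equalises the scalar beam pattern at swapped argument pairs, so a tempting but fallacious argument would conclude reciprocity at once. I would block this by pointing out that a symmetric scalar function on the product of two angular domains generically has distinct row-wise and column-wise argmaxes, and then exhibiting (analytically or numerically) that the angle-dependent $\psi_m$ of the IOS lies in precisely this non-coincident regime; only then is the contradiction with beam reciprocity complete.
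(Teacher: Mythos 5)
Your route is genuinely different from the paper's, and as written it does not close. The paper's proof never touches the phase shift $\psi_m$ at all: it argues by contradiction through the \emph{amplitude} angle-dependence. Using channel reciprocity (Propositions~\ref{proposition1} and \ref{proposition2}) it converts the beam-reciprocity hypothesis into the claim that the original incidence direction $(\theta_0,\phi_0)$ must be the incidence direction maximizing the power received at $(\theta_1,\phi_1)$; it then takes $\theta_0\rightarrow 90\degree$, invokes the element radiation pattern $F\propto\cos^{n}\theta_0$ from (\ref{impact}) so that $|g_m^{D}|\rightarrow 0$ and only the direct link survives, and concludes that some other incidence angle yields strictly more power --- a contradiction. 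Everything it uses is explicitly specified in the channel model. Your argument instead tries to show that the phase-stationarity map $(\theta_i,\phi_i)\mapsto(\theta_r^{*},\phi_r^{*})$ fails to be an involution because $\psi_m$ depends on both angles. That is a legitimate and arguably more physically illuminating mechanism (it is the one the CST simulations in Section~IV-B actually exhibit), but it cannot be completed from the stated model alone.

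The concrete gap is your third step. The model only asserts \emph{that} $\Gamma_m$ depends on the incident and departure angles; it gives no functional form for $\psi_m(\theta_i,\phi_i,\theta_r,\phi_r)$, so the residual offset $\vec{q}(\theta_r^{*},\phi_r^{*};\cdot)-\vec{q}(\theta_i,\phi_i;\cdot)$ being nonzero is not derivable --- you yourself retreat to ``generically non-zero'' and then to ``a direct appeal to the full-wave simulation,'' at which point the argument is an empirical observation, not a proof of the proposition within the model. There is also a structural problem with the differenced stationarity equations: in the far-field setting all elements share the same angle pair, so $\psi_m-\psi_{m'}$ reduces to the configured (e.g.\ one-bit) phase contrast between element states; the angle dependence enters only through how that contrast drifts with angle, which again is unquantified in the model. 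Finally, Rayleigh--Carson symmetry of $\Gamma_m$ does not by itself make the ideal-phase steering law involutive in the way your second step assumes without also using the symmetry of the geometric phase term, so even the baseline case needs more care. If you want to keep your mechanism, you must either posit an explicit parametric $\psi_m(\theta_i,\theta_r)$ and exhibit a two-element counterexample, or switch to the paper's amplitude-based limiting argument, which needs only the $\cos^{n}\theta$ factor already present in (\ref{impact}).
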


\begin{proof}
We give the proof by contradiction. Consider a far-field communication scenario where the angle of incidence and angle of departure to all IOS elements are the same. For instance, $\theta_{0,m} \approx \theta_{0}$. The downlink channel refers to the link between the transmitter with an incident angle $(\theta_0, \phi_0)$ and a destination at direction $(\theta_1, \phi_1)$, which is denoted as

\vspace{-15pt}
\begin{equation}
\begin{aligned}
    &(\theta_1, \phi_1)= \\
    &\argmax_{\theta,\phi}(h_{k,u}^{D}+ \sum_{m}\tilde{h}_{k,m}^{D}(\theta_0,\phi_0)g_{m}^D(\theta_0,\phi_0,\theta,\phi)\hat{h}_{m,u}^{D}(\theta,\phi)).
    \end{aligned}
    \label{appB:downbeam}
\end{equation}
The uplink channel refers to the link between the transmitter with an incident angle $(\theta_1, \phi_1)$ and a destination at direction $(\theta_2, \phi_2)$. If the beam reciprocity holds, such uplink channel is denoted as
 \begin{equation}
\begin{aligned}
    &(\theta_0, \phi_0)=(\theta_2, \phi_2)=\\&\argmax_{\theta,\phi}(h_{u,k}^{U}+  \sum_{m}\hat{h}_{u,m}^{U}(\theta_1,\phi_1)g_{m}^U(\theta_1,\phi_1,\theta,\phi)\tilde{h}_{m,k}^{U}(\theta,\phi)).
    \end{aligned}
\end{equation}
According to proposition~\ref{proposition1} and proposition~\ref{proposition2}, the above equation is transformed as
 \begin{equation}
\begin{aligned}
    &(\theta_0, \phi_0)=\\&\argmax_{\theta,\phi}(h_{k,u}^{D}+  \sum_{m}\hat{h}_{u,m}^{D}(\theta,\phi)g_{m}^D(\theta,\phi,\theta_1,\phi_1)\tilde{h}_{m,k}^{D}(\theta_1,\phi_1)),
    \end{aligned}
    \label{appB:counter}
\end{equation}
which indicates that $(\theta_0, \phi_0)$ is the direction of the incident signal that maximizes the power of the receiving signal at the direction of $(\theta_1, \phi_1)$. 

Consider the case where $\theta_0 \rightarrow 90\degree$, we drive the main beam direction $(\theta_1^*, \phi_1^*)$ by solving (\ref{appB:downbeam}). 
Based on (\ref{appB:counter}), the signal power of the beam at the direction $(\theta_1^*, \phi_1^*)$ can only be maximized when the incident angle is $\theta_0 \rightarrow 90\degree$. 
However, we have $cos(\theta_0) \rightarrow 0$ when $\theta_0 \rightarrow 90\degree$. 
Note that the radiation pattern $F(\theta_{i,m}^{D},\phi_{i,m}^{D})$ is proportional to $cos^{n}(\theta_0)$, were $n$ is a constant parameter \cite{tang}. 
According to (\ref{impact}), the EM response of the IOS element $g_m^{D}$ is proportional to $F(\theta_{i,m}^{D},\phi_{i,m}^{D})^{0.5}$, and thus, we have  
\begin{equation}
    \lim_{\theta_0 \rightarrow 90\degree}|g_m^{D}|=0\Rightarrow\lim_{\theta_0 \rightarrow 90\degree}|H_{k,u}^{D}|=|h_{k,u}^{D}|,
\end{equation}
which implies that only the direct link contributes to the receiving signal power and the IOS-based link has zero signal strength. Note that the power gain of the direct link is the same despite the change of the incident angle since the distance between the BS and the user is constant. Hence, there must exist another incident angle where power gain benefits from both the direct link and the IOS-assisted link. So (\ref{appB:counter}) cannot be satisfied, proving the beam non-reciprocity of IOS.
\end{proof}

\subsection{Discussion on IOS-based Channel Estimation}
The angle dependency of IOS brings a new challenge to channel estimation, especially for the high-mobility scenario.
Since the angle-domain information of the user-IOS path is time-variant due to the user mobility, the angle-dependent EM response $g_{m}^{\mathcal{X}}$ also varies with time and the accurate value of $g_{m}^{\mathcal{X}}$ is non-trivial to obtain. Nevertheless, the traditional IOS channel estimation method targets at minimizing the mean square error of the estimation result by optimizing the phase shift of each IOS element, which requires explicit information of $g_{m}^{\mathcal{X}}$. Hence, the traditional method does not fit anymore. 


\section{Simulation and Experimental results}
In order to validate the angle dependency of the IOS element's EM response to the signal, we perform a full-wave simulation with CST Microwave Studio and conduct experiments on an IOS. The element of the considered IOS has two mirror-symmetric layers, each of which contains a metallic patch and a pin node. The IOS element has two working states. For the ON state, both pin nodes are turned on and vice versa. The IOS works at $3.6$ GHz and the PIN diode model is BAR 65-02L. 
More details of the adopted IOS can be found in \cite{IOS}. 

\subsection{Refraction and Reflection Coefficient}
The refraction and reflection coefficient of the IOS element is simulated with CST Microwave Studio. We set a plane wave projecting toward the IOS element with a fixed azimuth angle of $0\degree$ and varied elevation angle $\theta$.

\begin{table}[t]
\centering
\caption{Value of Phase Shift $\psi_m$ given different incident angle.}
\label{table2}
\begin{tabular}{|c|c|c|c|c|c|c|}
\hline
\multicolumn{2}{|c|}{Elevation angle $\theta$} & -20\degree& -10\degree &0\degree& 10\degree& 20\degree\\
\hline
\multirow{2}{*}{Reflection}& On & -105\degree& -135\degree& -146\degree & -135\degree & -105\degree \\
\cline{2-7}
\multirow{2}{*}{} &Off	&11\degree	&-12\degree	&-20\degree	&-12\degree	&11\degree\\ 
\hline
\multirow{2}{*}{Refraction} & On & 162\degree & 133\degree &122\degree & 133\degree & 162\degree\\
\cline{2-7}
\multirow{2}{*}{} & Off &-32\degree	&-53\degree	&-62\degree	&-53\degree&-32\degree\\
\hline
\end{tabular}
\vspace{-12pt}
\label{pha}
\end{table}

As shown in Table~\ref{pha}, phase shift $\psi_m$ varies with the incident angle, which supports Remark~\ref{remarkbeam}. As the absolute value of $\theta$ increases, the phase shift of reflected and refracted signals increases for both states of IOS. This phenomenon is contrary to the ideal assumption that $\psi_m$ is a constant value despite the change of incident angle. Since the simulated IOS has a symmetrical structure, transmission/reflection coefficients are identical for two incident signals whose elevation angles have the same absolute values.
\begin{figure}[t]
\centerline{\includegraphics[width=9cm,height=10cm]{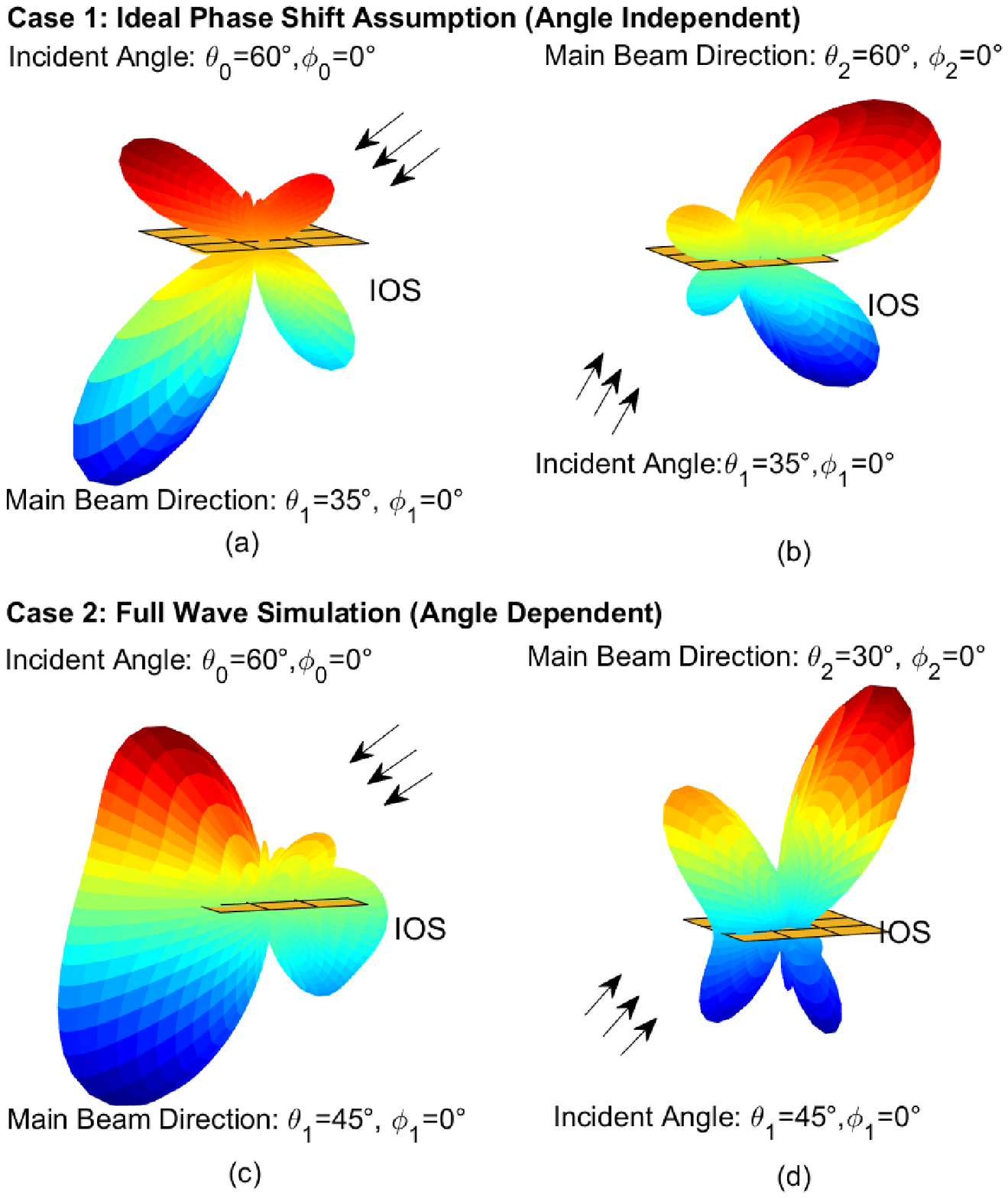}}
\caption{Beam reciprocity illustration for simulation under ideal phase assumption and full wave simulation.}
\vspace{-15pt}
\label{beam60454530}
\end{figure}

\subsection{Beam Non-reciprocity of IOS}
Fig.\ref{beam60454530} shows the radiation pattern simulation results of a $3\times3$ elements IOS with different incident angles, verifying Proposition \ref{beamremark}. To evaluate the impact of angle dependency on IOS beam reciprocity, we consider two cases for comparison: Case 1 is based on the ideal phase shift assumption and Case 2 is performed with CST Microwave Studio. 

We note that the beam reciprocity only holds in Case 1 where the ideal angle-independent assumption is adopted. The incident angle, ($\theta_0 = 60\degree,\phi_0=0\degree$), and the main beam direction, ($\theta_1 = 35\degree,\phi_1=0\degree$) in Fig.\ref{beam60454530} (a), are the main beam direction and the incident angle in Fig.\ref{beam60454530} (b), respectively. Though the results deviate from the practice, Case 1 serves as a benchmark to reflect the importance of the angle-dependent condition.  

In Fig.~\ref{beam60454530} (c) and (d), we consider the influence of angle dependency in Case 2, which verifies that the beam reciprocity does not hold in practice. In Fig.\ref{beam60454530} (c), the incident angle is ($\theta_0 = 60\degree,\phi_0=0\degree$) and the main beam direction of the refracted signal is ($\theta_1 = 45\degree,\phi_1=0\degree$). In Fig.\ref{beam60454530} (d), the incident angle is $(\theta_1 = 45\degree,\phi_1=0\degree)$ and the main beam direction of the refracted signal is $(\theta_2 = 30\degree,\phi_2=0\degree)$ with the same IOS configuration in Fig.~\ref{beam60454530} (c). Since $(\theta_0, \phi_0)\neq(\theta_2,\phi_2)$, a counter example of beam reciprocity is found, which proves the beam non-reciprocity characteristic of IOS. 

\begin{figure}[t]
\begin{minipage}[t]{1\linewidth}
\includegraphics[width=8cm,height=6cm]{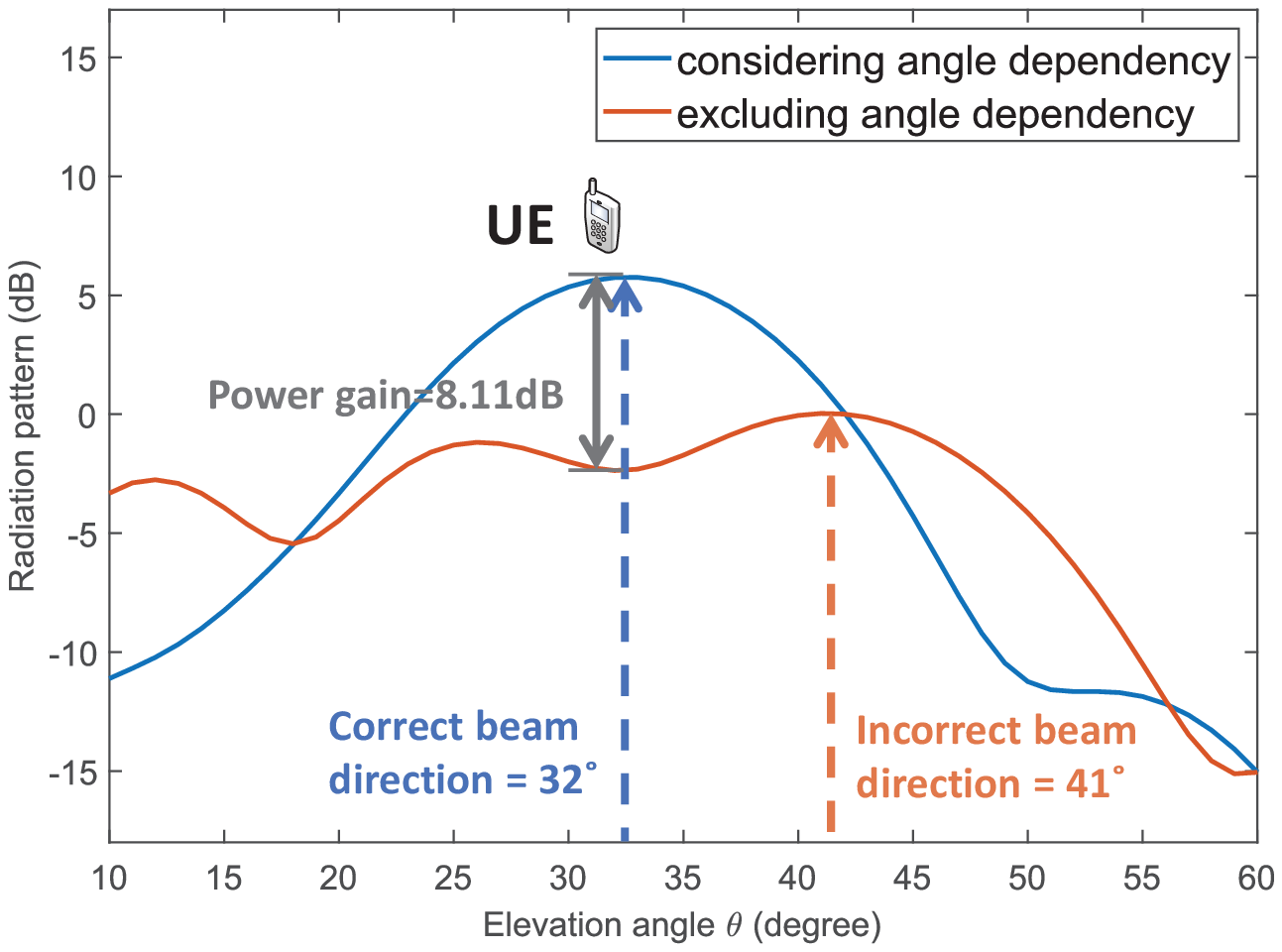}
\caption{IOS-assisted beamforming at the refractive zone.}
\label{tran_beamform}
\includegraphics[width=8cm,height=6cm]{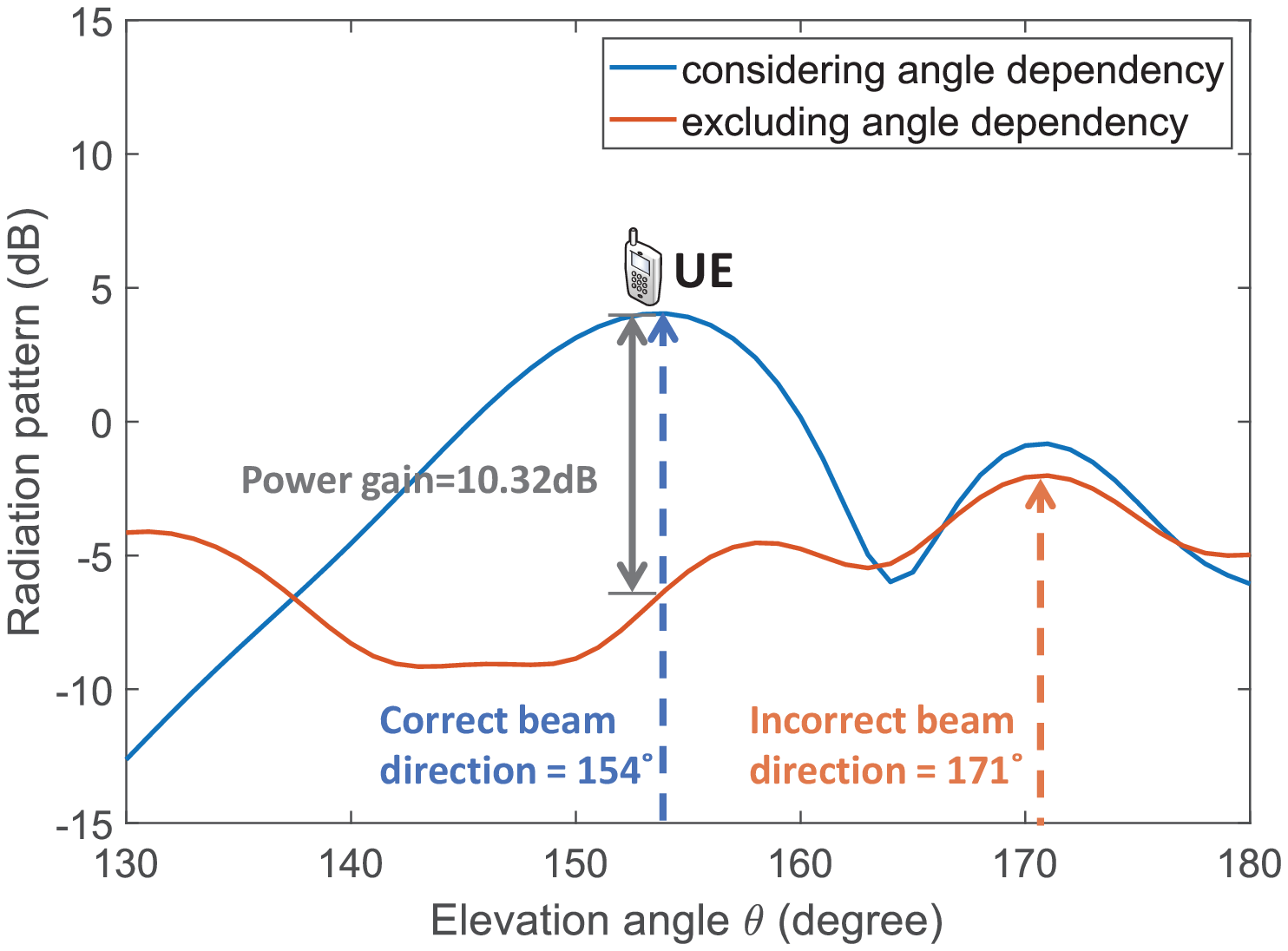}
\caption{IOS-assisted beamforming at the reflective zone.}
\vspace{-10pt}
\label{refl_beamform}
\end{minipage}
\end{figure}

\subsection{Influence of the IOS Angle Dependency on Beamforming}
Fig.~\ref{tran_beamform} and Fig.~\ref{refl_beamform} illustrate the beamforming under the conditions that the angle dependency of IOS is considered or excluded for comparison to illustrate Remark~\ref{remarkbeam}. 
Conventional IOS-assisted beamforming method optimizes array configuration with the ideally assumed phase shift $\psi_m \in \{0, \pi\}$ for each element regardless of the incident and departure angle. However, as shown in TABLE ~\ref{pha}, $\psi_m$ has diverse possible values with different incident angle. The ideal phase shift assumption can lead to performance degradation for IOS beamforming.  
In the simulation, we consider an IOS array with $42 \times 42$ elements located at the x-y plane. The transmitter is placed at $(0m,0m,0.15m)$. The directions of users are set as $(\theta=32\degree, \phi = 0\degree)$ at the refractive zone and $(\theta=154\degree, \phi=0\degree)$ at the reflective zone.   

 As shown in Fig.~\ref{tran_beamform} and Fig.~\ref{refl_beamform}, considering the angle dependent phase shift of the IOS element, beams can correctly point at the users at the refractive zone and the reflective zone. On the contrary, if the angle dependency is excluded from beamforming, beams will have $9\degree$ and $17\degree$ of error at the refraction zone and the reflective zone, respectively. Besides, the radiation power gains at the directions of the users diminish for $8.11$ dB and $10.32$ dB for the refraction case and the reflection case, respectively. Hence, the angle dependency of $\psi_m$ should be considered for correct IOS-assisted beamforming.   
 \vspace{-10pt}



\subsection{Experiment on IOS Channel Reciprocity}
To verify Proposition~\ref{proposition1} and Proposition~\ref{proposition2}, we conduct an experiment on IOS channel reciprocity. As shown in Fig.~\ref{ch_measure}, we set up a channel reciprocity measurement testbed for the IOS-assisted communication system in a microwave anechoic chamber deployed in \cite{IOS}. The adopted IOS consists of 320 reconfigurable elements and is controlled by a field-programmable gate array (FPGA), to which a computer sends array configuration command. The inside of the chamber is covered with wave absorbing material to simulate the free space. Two horn antennas are connected to the two ports of the vector network analyzer to measure the scattering 
parameters of the IOS-aided communication system.

\begin{figure}[t]
\centerline{\includegraphics[width=9.2cm,height=5.8cm]{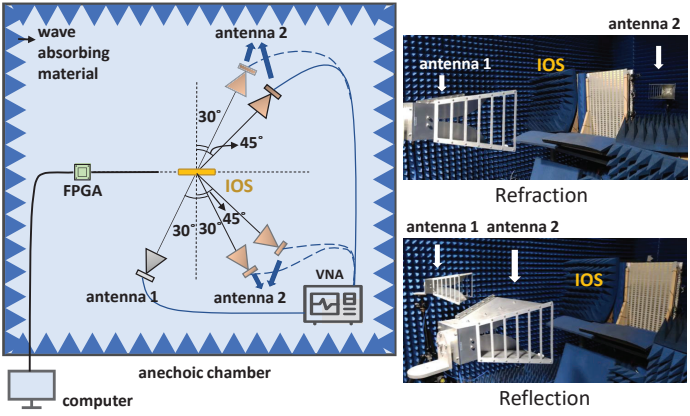}}
\caption{IOS channel reciprocity measurement environment.}
\vspace{-10pt}
\label{ch_measure}
\end{figure}

The antennas point at the center of the IOS array. The direction of antenna 1 to IOS is set as $\theta_{r/i,m}^{\mathcal{X}} = 30\degree$, and the direction of antenna 2 to IOS is set as $\theta_{i/r,m}^{\mathcal{X}} = 30\degree$ and $\theta_{i/r,m}^{\mathcal{X}} = 45\degree$ at both the reflective zone and the refractive zone of the IOS. For both antennas, we set $\phi_{i/r,m}^{\mathcal{X}}=0\degree$. The uplink channel, from antenna 1 to antenna 2, is represented by the S21 parameter measured by the vector network analyzer. The downlink channel, from antenna 2 to antenna 1, is represented by the S12 parameter. 

Fig.~\ref{reflect} shows the measurement of the uplink channel and the downlink channel where antenna 2 is placed at the reflective zone and the refractive zone of the IOS, respectively. For the refraction case, IOS configuration 1 maximizes the signal power at the direction of $\theta_{i/r,m}^{\mathcal{X}}=30\degree$; configuration 2 maximizes the signal power at the directon of $\theta_{i/r,m}^{\mathcal{X}}=45\degree$. For the reflection case, configuration 3 maximizes the signal power at the direction of $\theta_{i/r,m}^{\mathcal{X}}=30\degree$; configuration 4 maximizes the signal power at the directon of $\theta_{i/r,m}^{\mathcal{X}}=45\degree$. The uplink channel equals the downlink channel when the incident signal is either refracted or reflected, given different IOS configurations and signal directions, which verifies Proposition~\ref{proposition2}.
\vspace{-12pt}

\begin{figure}[htbp]
\centerline{\includegraphics[width=8.4cm,height=6.4cm]{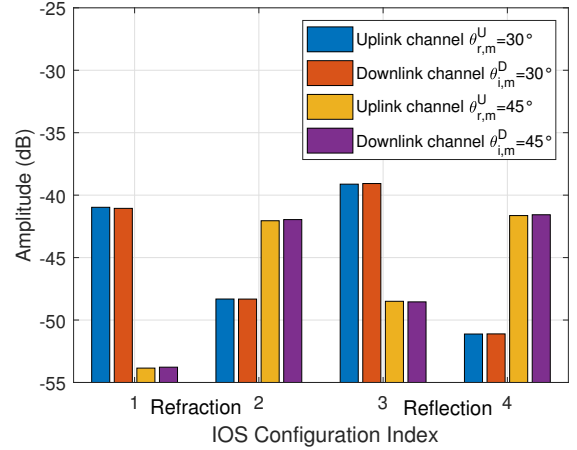}}
\caption{Channel measurement for signal refraction and reflection.}
\vspace{-0.2cm}
\label{reflect}
\end{figure}


\section{Conclusion}
In this letter, we have considered an IOS-assisted
communication system with one BS and multiple users. We derived the IOS-based channel model, based on which the channel reciprocity and beam reciprocity are analyzed. We conducted simulation and experiments on IOS and draw the following conclusions. \emph{First}, the IOS-based uplink channel and downlink channel are proved to be equal. \emph{Second}, the EM response of IOS element to the signal is affected by both the angles of incidence and departure. Such an angle-dependent property should be considered for IOS-based beamforming to improve accuracy. \emph{Third}, beam reciprocity does not hold for IOS. Thus, even when the transmitter and the receiver remain static, IOS should be configured independently for accurate uplink and downlink beamforming, respectively.

\end{document}